\documentclass[oneside,reqno,english]{amsart}
\usepackage[T1]{fontenc}
\usepackage[utf8]{inputenc}
\usepackage{xcolor}
\usepackage{babel}
\usepackage{prettyref}
\usepackage{amstext}
\usepackage{amsthm}
\usepackage{amssymb}
\usepackage[pdfusetitle,
 bookmarks=true,bookmarksnumbered=false,bookmarksopen=false,
 breaklinks=false,pdfborder={0 0 0},pdfborderstyle={},backref=false,colorlinks=false]
 {hyperref}
\hypersetup{
 colorlinks=true,citecolor=blue,linkcolor=blue,linktocpage=true}

\makeatletter
\numberwithin{equation}{section}
\numberwithin{figure}{section}

\usepackage{prettyref}

\newrefformat{cor}{Corollary~\ref{#1}}
\newrefformat{subsec}{Section~\ref{#1}}
\newrefformat{lem}{Lemma~\ref{#1}}
\newrefformat{thm}{Theorem~\ref{#1}}
\newrefformat{sec}{Section~\ref{#1}}
\newrefformat{chap}{Chapter~\ref{#1}}
\newrefformat{prop}{Proposition~\ref{#1}}
\newrefformat{exa}{Example~\ref{#1}}
\newrefformat{tab}{Table~\ref{#1}}
\newrefformat{rem}{Remark~\ref{#1}}
\newrefformat{def}{Definition~\ref{#1}}
\newrefformat{fig}{Figure~\ref{#1}}
\newrefformat{claim}{Claim~\ref{#1}}
\newrefformat{assu}{Assumption~\ref{#1}}

\makeatother

\theoremstyle{plain}
\newtheorem{thm}{\protect\theoremname}[section]
\newtheorem{prop}[thm]{\protect\propositionname}
\theoremstyle{definition}
\newtheorem{example}[thm]{\protect\examplename}
\theoremstyle{remark}
\newtheorem{rem}[thm]{\protect\remarkname}
\theoremstyle{plain}
\newtheorem{cor}[thm]{\protect\corollaryname}
\providecommand{\corollaryname}{Corollary}
\providecommand{\examplename}{Example}
\providecommand{\propositionname}{Proposition}
\providecommand{\remarkname}{Remark}
\providecommand{\theoremname}{Theorem}

\begin{document}
\subjclass[2020]{Primary: 81P40; secondary: 47N50, 81P45, 81Q10}
\title[Multipartite parity bounds and total correlation]{Multipartite parity bounds and total correlation}
\begin{abstract}
This paper studies multipartite observables formed from sums of local
self-adjoint contractions on tensor product Hilbert spaces. The square
of such a sum has a parity structure: after decomposing each local
product into commutator and anticommutator parts, the odd parity terms
cancel and only even parity contributions remain. This yields a norm
bound in terms of a family of pairwise defect weights built from local
commutator and anticommutator norms. These defect weights also control
an information theoretic estimate. The excess of the observable expectation
above the product state threshold is shown to necessarily carry a
definite amount of total correlation. Under a natural $\ell^{2}$-type
bound on each local family, this product state threshold becomes explicit,
which leads to a fully explicit lower bound on total correlation.
A simple depolarizing example illustrates the resulting decay mechanism
under local noise. 
\end{abstract}

\author{James Tian}
\address{Mathematical Reviews, 535 W. William St, Suite 210, Ann Arbor, MI
48103, USA}
\email{james.ftian@gmail.com}
\keywords{Multipartite observables, total correlation, operator inequalities,
tensor product Hilbert spaces, parity bounds, quantum relative entropy,
depolarizing noise}

\maketitle
\tableofcontents{}

\section{Introduction}\label{sec:1}

Let $H_{1},\dots,H_{n}$ be complex Hilbert spaces and let 
\[
a^{\left(r\right)}_{1},\dots,a^{\left(r\right)}_{m}\in B\left(H_{r}\right),\qquad r\in\left\{ 1,\dots,n\right\} ,
\]
be self-adjoint contractions. We study the multipartite sum 
\[
B=\sum^{m}_{i=1}a^{\left(1\right)}_{i}\otimes\cdots\otimes a^{\left(n\right)}_{i}\in B\left(H_{1}\otimes\cdots\otimes H_{n}\right).
\]
Operators of this form arise naturally as Bell type operators built
from local observables, and more generally as tensor-valued analogues
of classical signed sums. Two questions then present themselves. First,
how can one control $\left\Vert B\right\Vert $ in terms of the local
commutator and anticommutator structure? Second, when a state $\rho$
satisfies
\[
\mathrm{Tr}\left(\rho B\right)>\mathrm{Tr}\left(\sigma B\right)
\]
for every product state $\sigma$, to what extent does this already
quantify the total correlation 
\[
I_{\mathrm{tot}}(\rho):=D\left(\rho\Vert\rho_{1}\otimes\cdots\otimes\rho_{n}\right),
\]
where $\rho_{r}$ denotes the marginal of $\rho$ on $H_{r}$?

The background for these questions comes from two related lines of
work. On one side, Bell inequalities, multipartite witnesses, and
related positivity questions have been studied from the viewpoints
of optimization, operator inequalities, and quantum information; see
for example \cite{MR4783206,MR4496629,MR4411389}. On another side,
relative entropy, tensorization, and quantitative estimates for quantum
Markov evolutions have led to a growing body of information theoretic
and operator algebraic inequalities; see for example \cite{MR4361872,MR4444072,MR4518485,MR4860032}.
The present paper sits somewhat differently in this picture. We do
not try to classify Bell inequalities or optimize a fixed family of
multipartite correlation inequalities. Instead, we isolate an operator
theoretic mechanism in the square of $B$ itself and use it to pass
from local commutator and anticommutator data to explicit lower bounds
on total correlation.

The starting point is that $B^{2}$ carries a built-in parity structure.
After expanding $B^{2}$, the mixed terms do not lead to an arbitrary
family of tensor expressions. At each site, the local product $a^{\left(r\right)}_{i}a^{\left(r\right)}_{j}$
splits into its commutator and anticommutator parts. When these local
pieces are tensored together, the odd parity contributions cancel,
and only the even parity sector survives. This produces a canonical
family of defect weights $\phi^{\left(n\right)}_{ij}$ indexed by
unordered pairs $\left\{ i,j\right\} $, and leads to the norm estimate
\[
\left\Vert B\right\Vert ^{2}\leq m+\sum_{1\leq i<j\leq m}\phi^{\left(n\right)}_{ij}
\]
proved in \prettyref{prop:2-1}. This is the basic structural result
of the paper.

The same defect quantity then enters the information theoretic part
of the argument. Let $\mathcal{P}$ denote the set of product states
on $H_{1}\otimes\cdots\otimes H_{n}$ and define the product threshold
\[
\Gamma_{\mathrm{prod}}\left(B\right):=\sup_{\sigma\in\mathcal{P}}\mathrm{Tr}\left(\sigma B\right).
\]
For a state $\rho$, the excess 
\[
\Delta_{B}\left(\rho\right):=\left(\mathrm{Tr}\left(\rho B\right)-\Gamma_{\mathrm{prod}}\left(B\right)\right)_{+}
\]
measures the amount by which the value of $B$ on $\rho$ exceeds
the product threshold. \prettyref{thm:3-1} shows that a positive
value of $\Delta_{B}\left(\rho\right)$ cannot occur for free: it
entails both separation from the product class in trace norm and a
definite amount of total correlation. In this way, the same denominator
that controls $\left\Vert B\right\Vert $ also controls the information
cost of exceeding the product threshold. This part of the argument
is close in spirit to recent work on quasi-factorization, tensorization,
and entropic inequalities in noncommutative settings; see for example
\cite{MR4361872,MR4518485,MR4444072,MR4520657}.

To make this useful, one still needs to control the threshold itself.
Our next step is therefore to impose a simple quadratic bound on the
expectation vectors associated with each local family $\{a^{\left(r\right)}_{i}\}$.
Under this hypothesis, \prettyref{thm:3-4} gives an explicit upper
bound for $\Gamma_{\mathrm{prod}}\left(B\right)$. Combined with \prettyref{thm:3-1},
this yields the explicit correlation estimate 
\[
I_{\mathrm{tot}}\left(\rho\right)\geq\frac{1}{2}\frac{\left(\mathrm{Tr}\left(\rho B\right)-\prod^{n}_{r=1}C^{1/2}_{r}\right)^{2}_{+}}{m+\sum_{1\leq i<j\leq m}\phi^{\left(n\right)}_{ij}}
\]
in \prettyref{cor:3-5}. In this form the roles of the two structural
quantities become clear. The numerator measures excess above an explicit
product threshold, while the denominator measures the mixed commutator/anticommutator
complexity of $B$.

We also include a simple dynamical illustration. For local depolarizing
semigroups, centered observables are eigenvectors for the Heisenberg
evolution, so the multipartite observable $B$ decays by an exact
scalar factor. This gives a concrete model in which the excess $\Delta_{B}\left(\rho_{t}\right)$
over the product threshold and the corresponding lower bound on total
correlation can be followed explicitly along a natural local noise
flow. We include this example mainly to show that the parity defect
denominator remains visible beyond the static setting. For related
recent work on entropy decay, decoherence rates, and geometric features
of quantum channels and quantum Markov semigroups, see for example
\cite{MR4282395,MR4496596,MR4703456,MR4704528}.

From this point of view, the contribution of the paper is to isolate
the parity expansion for $B^{2}$ and the resulting defect weights
$\phi^{\left(n\right)}_{ij}$, and then to derive the correlation
bounds that follow from them. Once that expansion is in place, one
can pass from operator control to information theoretic control, and
under a uniform $\ell^{2}$ bound on the local expectation vectors
one obtains a fully explicit lower bound on total correlation.

The paper is organized accordingly: \prettyref{sec:2} develops the
parity expansion and the norm bound, while \prettyref{sec:3} derives
the resulting static correlation estimates, including the product
threshold bound and the explicit lower bound on total correlation.
\prettyref{sec:4} then applies these estimates to product quantum
Markov semigroups and local depolarizing noise. The final section
contains some brief concluding remarks.

\section{Multipartite Parity Bounds}\label{sec:2}

We begin with the basic multipartite norm estimate. The point is that,
after expanding $B^{2}$, the mixed terms admit a canonical even parity
decomposition across the tensor factors. This produces a natural family
of pair weights $\phi^{\left(n\right)}_{ij}$ and reduces the operator
bound to a finite sum of commutator and anticommutator defects. In
the bipartite case $n=2$ this recovers the complete-graph inequality
from \cite{tian2025ineq}, while here we work in the general multipartite
setting.
\begin{prop}
\label{prop:2-1} For each $r\in\left\{ 1,\dots,n\right\} $, let
$H_{r}$ be a complex Hilbert space. Let 
\[
a^{\left(r\right)}_{1},\dots,a^{\left(r\right)}_{m}\in B\left(H_{r}\right)
\]
be self-adjoint contractions. Set 
\[
\mathcal{H}:=H_{1}\otimes\cdots\otimes H_{n}
\]
and define 
\[
B:=\sum^{m}_{i=1}a^{\left(1\right)}_{i}\otimes\cdots\otimes a^{\left(n\right)}_{i}\in B\left(\mathcal{H}\right).
\]
For $1\leq i<j\leq m$, define 
\[
\phi^{\left(n\right)}_{ij}:=2^{1-n}\sum_{\substack{S\subseteq\left\{ 1,\dots,n\right\} \\
\left|S\right|\text{ even}
}
}\prod_{r\in S}\left\Vert \left[a^{\left(r\right)}_{i},a^{\left(r\right)}_{j}\right]\right\Vert \prod_{r\notin S}\left\Vert \left\{ a^{\left(r\right)}_{i},a^{\left(r\right)}_{j}\right\} \right\Vert .
\]
Then 
\begin{equation}
\left\Vert B\right\Vert ^{2}\leq m+\sum_{1\leq i<j\leq m}\phi^{\left(n\right)}_{ij}.\label{eq:2-1}
\end{equation}
\end{prop}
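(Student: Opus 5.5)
Since $B$ is self-adjoint, $\|B\|^{2}=\|B^{2}\|$. The plan is to expand $B^{2}$, split the local products at each site into commutator and anticommutator parts, and bound the result term by term. Write $A_{i}:=a^{(1)}_{i}\otimes\cdots\otimes a^{(n)}_{i}$, so that
\[
B^{2}=\sum_{i}A_{i}^{2}+\sum_{i<j}\left(A_{i}A_{j}+A_{j}A_{i}\right).
\]
Each $A_{i}^{2}=\bigotimes_{r}(a^{(r)}_{i})^{2}$ is a positive contraction, so the diagonal part has norm at most $m$. It therefore suffices to prove, for each pair $i<j$,
\[
\left\Vert A_{i}A_{j}+A_{j}A_{i}\right\Vert \leq\phi^{(n)}_{ij}.
\]

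For the key step, fix $i<j$ and abbreviate $x_{r}=a^{(r)}_{i}$, $y_{r}=a^{(r)}_{j}$, $c_{r}=[x_{r},y_{r}]$, $d_{r}=\{x_{r},y_{r}\}$. Then
\[
x_{r}y_{r}=\tfrac{1}{2}(d_{r}+c_{r}),\qquad y_{r}x_{r}=\tfrac{1}{2}(d_{r}-c_{r}).
\]
Tensoring these identities and expanding by multilinearity gives
\[
A_{i}A_{j}=2^{-n}\sum_{S}\bigotimes_{r\in S}c_{r}\otimes\bigotimes_{r\notin S}d_{r},
\]
where $S$ ranges over all subsets of $\{1,\dots,n\}$ and the tensor factors are placed in their natural site order. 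The same expansion for $A_{j}A_{i}$ carries the sign $(-1)^{|S|}$ on the $S$-term. Adding the two, the terms with $|S|$ odd cancel and those with $|S|$ even double, so
\[
A_{i}A_{j}+A_{j}A_{i}=2^{1-n}\sum_{|S|\text{ even}}\bigotimes_{r\in S}c_{r}\otimes\bigotimes_{r\notin S}d_{r}.
\]

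Finally, apply the triangle inequality together with the cross-norm property $\|T_{1}\otimes\cdots\otimes T_{n}\|=\prod_{r}\|T_{r}\|$ for the spatial tensor product of bounded operators. This gives exactly $\|A_{i}A_{j}+A_{j}A_{i}\|\leq\phi^{(n)}_{ij}$. Summing over pairs and adding the diagonal bound yields \eqref{eq:2-1}.

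The only delicate point is the sign bookkeeping in the parity cancellation, namely checking that the $S$-term of $A_{j}A_{i}$ carries exactly $(-1)^{|S|}$. This holds because each site in $S$ contributes one factor of $-c_{r}$. The remaining steps are routine.
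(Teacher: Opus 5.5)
Your proposal is correct and follows essentially the same route as the paper's proof. Both expand $B^{2}$ and bound the diagonal part by $m$. Both then write each local product as half the anticommutator plus or minus half the commutator, cancel the odd-$|S|$ terms, and finish with the triangle inequality and the cross-norm identity $\|T_{1}\otimes\cdots\otimes T_{n}\|=\prod_{r}\|T_{r}\|$. You state that identity explicitly, whereas the paper uses it without comment.
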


\begin{proof}
For each $i\in\left\{ 1,\dots,m\right\} $, set 
\[
u_{i}:=a^{\left(1\right)}_{i}\otimes\cdots\otimes a^{\left(n\right)}_{i}\in B\left(\mathcal{H}\right).
\]
Since each $a^{\left(r\right)}_{i}$ is self-adjoint, $u_{i}$ is
self-adjoint. Since each $a^{\left(r\right)}_{i}$ is a contraction,
we also have 
\[
\left\Vert u_{i}\right\Vert =\prod^{n}_{r=1}\Vert a^{\left(r\right)}_{i}\Vert\leq1.
\]
Hence $B=\sum^{m}_{i=1}u_{i}$ and 
\[
B^{2}=\sum^{m}_{i=1}u^{2}_{i}+\sum_{1\leq i<j\leq m}\left(u_{i}u_{j}+u_{j}u_{i}\right).
\]
Taking norms and using the triangle inequality gives 
\[
\left\Vert B\right\Vert ^{2}=\left\Vert B^{2}\right\Vert \leq\left\Vert \sum\nolimits^{m}_{i=1}u^{2}_{i}\right\Vert +\sum\nolimits_{1\leq i<j\leq m}\left\Vert u_{i}u_{j}+u_{j}u_{i}\right\Vert .
\]
Since $\left\Vert u^{2}_{i}\right\Vert \leq\left\Vert u_{i}\right\Vert ^{2}\leq1$,
we obtain 
\[
\left\Vert \sum\nolimits^{m}_{i=1}u^{2}_{i}\right\Vert \leq\sum^{m}_{i=1}\left\Vert u^{2}_{i}\right\Vert \leq m.
\]
It remains to estimate the mixed terms.

Fix $1\leq i<j\leq m$. For each $r\in\left\{ 1,\dots,n\right\} $,
write 
\[
A^{\left(r\right)}_{ij}:=\left\{ a^{\left(r\right)}_{i},a^{\left(r\right)}_{j}\right\} ,\qquad C^{\left(r\right)}_{ij}:=\left[a^{\left(r\right)}_{i},a^{\left(r\right)}_{j}\right].
\]
Then 
\[
a^{\left(r\right)}_{i}a^{\left(r\right)}_{j}=\frac{1}{2}\left(A^{\left(r\right)}_{ij}+C^{\left(r\right)}_{ij}\right),\qquad a^{\left(r\right)}_{j}a^{\left(r\right)}_{i}=\frac{1}{2}\left(A^{\left(r\right)}_{ij}-C^{\left(r\right)}_{ij}\right).
\]
Hence 
\[
u_{i}u_{j}=2^{-n}\bigotimes^{n}_{r=1}\left(A^{\left(r\right)}_{ij}+C^{\left(r\right)}_{ij}\right),
\]
and 
\[
u_{j}u_{i}=2^{-n}\bigotimes^{n}_{r=1}\left(A^{\left(r\right)}_{ij}-C^{\left(r\right)}_{ij}\right).
\]

We now expand both tensor products. For each subset $S\subseteq\left\{ 1,\dots,n\right\} $,
let 
\[
T_{S}:=\bigotimes^{n}_{r=1}X^{\left(S\right)}_{r},
\]
where 
\[
X^{\left(S\right)}_{r}=\begin{cases}
C^{\left(r\right)}_{ij}, & r\in S,\\
A^{\left(r\right)}_{ij}, & r\notin S.
\end{cases}
\]
Then 
\[
u_{i}u_{j}=2^{-n}\sum_{S\subseteq\left\{ 1,\dots,n\right\} }T_{S},
\]
while 
\[
u_{j}u_{i}=2^{-n}\sum_{S\subseteq\left\{ 1,\dots,n\right\} }(-1)^{\left|S\right|}T_{S}.
\]
Therefore 
\[
u_{i}u_{j}+u_{j}u_{i}=2^{-n}\sum_{S\subseteq\left\{ 1,\dots,n\right\} }\left(1+(-1)^{\left|S\right|}\right)T_{S}.
\]
If $\left|S\right|$ is odd, then $1+(-1)^{\left|S\right|}=0$. If
$\left|S\right|$ is even, then $1+(-1)^{\left|S\right|}=2$. Thus
\[
u_{i}u_{j}+u_{j}u_{i}=2^{1-n}\sum_{\substack{S\subseteq\left\{ 1,\dots,n\right\} \\
\left|S\right|\text{ even}
}
}T_{S}.
\]
Taking norms and using the triangle inequality yields 
\[
\left\Vert u_{i}u_{j}+u_{j}u_{i}\right\Vert \leq2^{1-n}\sum_{\substack{S\subseteq\left\{ 1,\dots,n\right\} \\
\left|S\right|\text{ even}
}
}\left\Vert T_{S}\right\Vert .
\]
It follows that
\[
\left\Vert T_{S}\right\Vert =\prod_{r\in S}\Vert C^{\left(r\right)}_{ij}\Vert\prod_{r\notin S}\Vert A^{\left(r\right)}_{ij}\Vert.
\]
Substituting the definitions of $A^{\left(r\right)}_{ij}$ and $C^{\left(r\right)}_{ij}$,
we obtain 
\[
\left\Vert u_{i}u_{j}+u_{j}u_{i}\right\Vert \leq\phi^{\left(n\right)}_{ij}.
\]

Returning to the expansion of $B^{2}$, we get \prettyref{eq:2-1}.
\end{proof}

We now give a simple three-qubit Pauli-string example showing that
the estimate in \prettyref{eq:2-1} can be sharp already in a genuinely
tripartite setting.
\begin{example}
Let $n=3$, let $H_{1}=H_{2}=H_{3}=\mathbb{C}^{2}$, and let 
\[
X=\begin{pmatrix}0 & 1\\
1 & 0
\end{pmatrix},\qquad Y=\begin{pmatrix}0 & -i\\
i & 0
\end{pmatrix},\qquad Z=\begin{pmatrix}1 & 0\\
0 & -1
\end{pmatrix}
\]
be the Pauli matrices. Define 
\[
u_{1}:=X\otimes Y\otimes X,\qquad u_{2}:=X\otimes Y\otimes Z,\qquad u_{3}:=Z\otimes Z\otimes Z,
\]
and set 
\[
B:=u_{1}+u_{2}+u_{3}.
\]
Each $u_{i}$ is a self-adjoint unitary, hence $\left\Vert u_{i}\right\Vert =1$.

We first compute $\left\Vert B\right\Vert ^{2}$. Since $u^{2}_{i}=I$
for each $i$, we have 
\[
B^{2}=3I+\left(u_{1}u_{2}+u_{2}u_{1}\right)+\left(u_{1}u_{3}+u_{3}u_{1}\right)+\left(u_{2}u_{3}+u_{3}u_{2}\right).
\]
Now 
\[
u_{1}u_{2}=\left(XX\right)\otimes\left(YY\right)\otimes\left(XZ\right)=I\otimes I\otimes\left(XZ\right),
\]
while 
\[
u_{2}u_{1}=\left(XX\right)\otimes\left(YY\right)\otimes\left(ZX\right)=I\otimes I\otimes\left(ZX\right).
\]
Since $XZ=-ZX$, it follows that 
\[
u_{1}u_{2}+u_{2}u_{1}=0.
\]
Similarly, 
\[
u_{1}u_{3}=\left(XZ\right)\otimes\left(YZ\right)\otimes\left(XZ\right),\qquad u_{3}u_{1}=\left(ZX\right)\otimes\left(ZY\right)\otimes\left(ZX\right),
\]
and since $X,Z$ anticommute and $Y,Z$ anticommute, the three local
sign changes give 
\[
u_{1}u_{3}=-u_{3}u_{1},
\]
hence 
\[
u_{1}u_{3}+u_{3}u_{1}=0.
\]
For the remaining pair, we obtain 
\[
u_{2}u_{3}=\left(XZ\right)\otimes\left(YZ\right)\otimes\left(ZZ\right),\qquad u_{3}u_{2}=\left(ZX\right)\otimes\left(ZY\right)\otimes\left(ZZ\right).
\]
There are exactly two local sign changes, so $u_{2}$ and $u_{3}$
commute: 
\[
u_{2}u_{3}=u_{3}u_{2}.
\]
Therefore 
\[
B^{2}=3I+2u_{2}u_{3}.
\]
Since $u_{2}u_{3}$ is again a self-adjoint unitary, its spectrum
is contained in $\left\{ -1,1\right\} $. It follows that 
\[
\mathrm{spec}\left(B^{2}\right)=\left\{ 1,5\right\} ,\qquad\left\Vert B\right\Vert ^{2}=5.
\]

We now compute the defect weights in \prettyref{prop:2-1}. For Pauli
matrices, the local commutator-anticommutator norms are 
\[
\left\Vert \left\{ P,P\right\} \right\Vert =2,\qquad\left\Vert i\left[P,P\right]\right\Vert =0,
\]
and for distinct Pauli matrices $P\neq Q$, 
\[
\left\Vert \left\{ P,Q\right\} \right\Vert =0,\qquad\left\Vert i\left[P,Q\right]\right\Vert =2.
\]

For the pair $(1,2)$, the local pairs are 
\[
\left(a^{(1)}_{1},a^{(1)}_{2}\right)=\left(X,X\right),\qquad\left(a^{(2)}_{1},a^{(2)}_{2}\right)=\left(Y,Y\right),\qquad\left(a^{(3)}_{1},a^{(3)}_{2}\right)=\left(X,Z\right).
\]
Hence 
\[
\left\Vert \left\{ a^{(1)}_{1},a^{(1)}_{2}\right\} \right\Vert =2,\qquad\left\Vert \left\{ a^{(2)}_{1},a^{(2)}_{2}\right\} \right\Vert =2,\qquad\left\Vert \left\{ a^{(3)}_{1},a^{(3)}_{2}\right\} \right\Vert =0,
\]
and 
\[
\left\Vert i\left[a^{(1)}_{1},a^{(1)}_{2}\right]\right\Vert =0,\qquad\left\Vert i\left[a^{(2)}_{1},a^{(2)}_{2}\right]\right\Vert =0,\qquad\left\Vert i\left[a^{(3)}_{1},a^{(3)}_{2}\right]\right\Vert =2.
\]
Since $n=3$, the even subsets of $\left\{ 1,2,3\right\} $ are 
\[
\varnothing,\qquad\left\{ 1,2\right\} ,\qquad\left\{ 1,3\right\} ,\qquad\left\{ 2,3\right\} .
\]
Thus every term in the defining sum for $\phi^{(3)}_{12}$ contains
at least one vanishing factor: 
\[
\phi^{(3)}_{12}=2^{-2}\left[\left(2\cdot2\cdot0\right)+\left(0\cdot0\cdot0\right)+\left(0\cdot2\cdot2\right)+\left(2\cdot0\cdot2\right)\right]=0.
\]

For the pair $(1,3)$, the local pairs are 
\[
\left(X,Z\right),\qquad\left(Y,Z\right),\qquad\left(X,Z\right),
\]
so all anticommutator norms vanish and all commutator norms are equal
to $2$. Again every even-parity term contains at least one anticommutator
factor, hence 
\[
\phi^{(3)}_{13}=0.
\]

For the pair $(2,3)$, the local pairs are 
\[
\left(X,Z\right),\qquad\left(Y,Z\right),\qquad\left(Z,Z\right).
\]
Therefore 
\[
\left\Vert \left\{ a^{(1)}_{2},a^{(1)}_{3}\right\} \right\Vert =0,\qquad\left\Vert \left\{ a^{(2)}_{2},a^{(2)}_{3}\right\} \right\Vert =0,\qquad\left\Vert \left\{ a^{(3)}_{2},a^{(3)}_{3}\right\} \right\Vert =2,
\]
and 
\[
\left\Vert i\left[a^{(1)}_{2},a^{(1)}_{3}\right]\right\Vert =2,\qquad\left\Vert i\left[a^{(2)}_{2},a^{(2)}_{3}\right]\right\Vert =2,\qquad\left\Vert i\left[a^{(3)}_{2},a^{(3)}_{3}\right]\right\Vert =0.
\]
Hence 
\[
\phi^{(3)}_{23}=2^{-2}\left[\left(0\cdot0\cdot2\right)+\left(2\cdot2\cdot2\right)+\left(2\cdot2\cdot0\right)+\left(0\cdot2\cdot0\right)\right]=2.
\]

We conclude that 
\[
\sum_{1\leq i<j\leq3}\phi^{(3)}_{ij}=0+0+2=2.
\]
Since $m=3$, \prettyref{prop:2-1} yields 
\[
\left\Vert B\right\Vert ^{2}\leq m+\sum_{1\leq i<j\leq3}\phi^{(3)}_{ij}=3+2=5.
\]
Combining this with the exact computation above, we obtain 
\[
\left\Vert B\right\Vert ^{2}=5=m+\sum_{1\leq i<j\leq3}\phi^{(3)}_{ij}.
\]
Thus the estimate in \prettyref{prop:2-1} is sharp in this tripartite
example. 
\end{example}

\section{Static Correlation Bounds}\label{sec:3}

The norm estimate in \prettyref{prop:2-1} has an immediate information
theoretic consequence. Once $\left\Vert B\right\Vert $ is controlled
in terms of the parity defects $\phi^{\left(n\right)}_{ij}$, any
state whose expectation of $B$ exceeds the product threshold must
lie quantitatively far from the set of product states. In particular,
a positive value of $\Delta_{B}\left(\rho\right)$ as in \prettyref{eq:2-4}
ensures a nonzero lower bound for total correlation.
\begin{thm}
\label{thm:3-1} In the setting of \prettyref{prop:2-1}, let 
\begin{equation}
\mathcal{P}:=\left\{ \sigma^{\left(1\right)}\otimes\cdots\otimes\sigma^{\left(n\right)}:\sigma^{\left(r\right)}\ \text{is a state on }H_{r}\ \text{for each }r\right\} \label{eq:2-2}
\end{equation}
denote the set of product states on $\mathcal{H}$. Define the product
threshold 
\begin{equation}
\Gamma_{\mathrm{prod}}\left(B\right):=\sup_{\sigma\in\mathcal{P}}\mathrm{Tr}\left(\sigma B\right),\label{eq:2-3}
\end{equation}
and, for a state $\rho$ on $\mathcal{H}$, the excess
\begin{equation}
\Delta_{B}\left(\rho\right):=\left(\mathrm{Tr}\left(\rho B\right)-\Gamma_{\mathrm{prod}}\left(B\right)\right)_{+}.\label{eq:2-4}
\end{equation}
Then the following hold. 
\begin{enumerate}
\item For every state $\rho$ on $\mathcal{H}$, 
\begin{equation}
\inf_{\sigma\in\mathcal{P}}\left\Vert \rho-\sigma\right\Vert _{1}\geq\frac{\Delta_{B}\left(\rho\right)}{\left(m+\sum_{1\leq i<j\leq m}\phi^{\left(n\right)}_{ij}\right)^{1/2}}.\label{eq:2-5}
\end{equation}
\item Let $\rho$ be a state on $\mathcal{H}$ and let $\rho_{r}$ denote
its marginal on $H_{r}$. Define the total correlation 
\[
I_{\mathrm{tot}}\left(\rho\right):=D\left(\rho\Vert\rho_{1}\otimes\cdots\otimes\rho_{n}\right),
\]
where $D\left(\cdot\middle\Vert\cdot\right)$ is the quantum relative
entropy with natural logarithm. Then 
\begin{equation}
I_{\mathrm{tot}}\left(\rho\right)\geq\frac{1}{2}\frac{\Delta_{B}\left(\rho\right)^{2}}{m+\sum_{1\leq i<j\leq m}\phi^{\left(n\right)}_{ij}}.\label{eq:2-6}
\end{equation}
\end{enumerate}
\end{thm}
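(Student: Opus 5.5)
The plan is to reduce both parts to the operator norm bound of \prettyref{prop:2-1}, via Hölder duality between trace norm and operator norm, and then to convert trace distance into relative entropy with Pinsker's inequality. Throughout, write
\[
K:=m+\sum_{1\leq i<j\leq m}\phi^{\left(n\right)}_{ij},
\]
so that \prettyref{eq:2-1} reads $\Vert B\Vert\leq K^{1/2}$. If $\Delta_{B}\left(\rho\right)=0$ both inequalities are trivial (the left sides are nonnegative). We may therefore assume $\mathrm{Tr}\left(\rho B\right)>\Gamma_{\mathrm{prod}}\left(B\right)$; in particular $\Gamma_{\mathrm{prod}}\left(B\right)$ is finite, since $\left|\mathrm{Tr}\left(\sigma B\right)\right|\leq\Vert B\Vert$.

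\textbf{Part (1).} Fix any $\sigma\in\mathcal{P}$. Since $B$ is bounded and $\rho-\sigma$ is trace class, Hölder's inequality gives
\[
\mathrm{Tr}\left(\rho B\right)-\mathrm{Tr}\left(\sigma B\right)=\mathrm{Tr}\left(\left(\rho-\sigma\right)B\right)\leq\Vert\rho-\sigma\Vert_{1}\Vert B\Vert .
\]
The left side is at least $\mathrm{Tr}\left(\rho B\right)-\Gamma_{\mathrm{prod}}\left(B\right)=\Delta_{B}\left(\rho\right)$, by the definition \prettyref{eq:2-3} of the supremum. Combining this with $\Vert B\Vert\leq K^{1/2}$ gives
\[
\Vert\rho-\sigma\Vert_{1}\geq\Delta_{B}\left(\rho\right)/K^{1/2}.
\]
Taking the infimum over $\sigma\in\mathcal{P}$ yields \prettyref{eq:2-5}. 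One degenerate case needs a remark: if $B=0$ then $\Delta_{B}=0$, so the assumption $\Delta_{B}>0$ already excludes division by zero. Note also that $K\geq m\geq1$.

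\textbf{Part (2).} The marginal product $\pi:=\rho_{1}\otimes\cdots\otimes\rho_{n}$ is itself a state in $\mathcal{P}$, so part (1), or directly the Hölder step above applied with $\sigma=\pi$, gives
\[
\Vert\rho-\pi\Vert_{1}\geq\Delta_{B}\left(\rho\right)/K^{1/2}.
\]
Next apply the quantum Pinsker inequality, with the natural logarithm:
\[
D\left(\rho\Vert\pi\right)\geq\tfrac{1}{2}\Vert\rho-\pi\Vert_{1}^{2}.
\]
Squaring the previous bound and substituting gives
\[
I_{\mathrm{tot}}\left(\rho\right)\geq\tfrac{1}{2}\Delta_{B}\left(\rho\right)^{2}/K,
\]
which is \prettyref{eq:2-6}.

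\textbf{Main obstacle.} The only nontrivial point is the validity of Pinsker's inequality in possibly infinite-dimensional $H_{r}$. The standard route is to reduce to the classical Pinsker inequality via the data processing inequality. Take the two-outcome measurement given by the projection $P$ onto the positive part of $\rho-\pi$; it satisfies $\Vert\rho-\pi\Vert_{1}=2\,\mathrm{Tr}\left(P\left(\rho-\pi\right)\right)$. Monotonicity of relative entropy under this measurement channel then bounds $D\left(\rho\Vert\pi\right)$ from below by the binary classical relative entropy, to which classical Pinsker applies. In infinite dimensions, monotonicity of Umegaki/Araki relative entropy under channels is standard, and when $D=+\infty$ the inequality is trivial. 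So I would simply cite quantum Pinsker with this remark rather than reprove it.
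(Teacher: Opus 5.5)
Your proposal is correct and follows essentially the same route as the paper. Part (1) uses trace-norm/operator-norm duality with $\Vert B\Vert\leq K^{1/2}$ from \prettyref{prop:2-1} and then an infimum over $\mathcal{P}$; part (2) notes $\rho_{1}\otimes\cdots\otimes\rho_{n}\in\mathcal{P}$ and applies the quantum Pinsker inequality. The differences are cosmetic: you apply the H\"older bound directly at the marginal product rather than through the infimum, and you sketch Pinsker via data processing, which the paper simply cites.
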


\begin{proof}
We first prove (1). Fix a state $\rho$ on $\mathcal{H}$. If $\Delta_{B}\left(\rho\right)=0$
there is nothing to prove, so assume $\Delta_{B}\left(\rho\right)>0$.
For any $\sigma\in\mathcal{P}$ we have, by definition \prettyref{eq:2-3},
\[
\mathrm{Tr}\left(\rho B\right)-\Gamma_{\mathrm{prod}}\left(B\right)\leq\mathrm{Tr}\left(\rho B\right)-\mathrm{Tr}\left(\sigma B\right),
\]
and therefore 
\[
\Delta_{B}\left(\rho\right)=\left(\mathrm{Tr}\left(\rho B\right)-\Gamma_{\mathrm{prod}}\left(B\right)\right)_{+}\leq\left|\mathrm{Tr}\left(\rho B\right)-\mathrm{Tr}\left(\sigma B\right)\right|=\left|\mathrm{Tr}\left(\left(\rho-\sigma\right)B\right)\right|.
\]
Using the duality between trace norm and operator norm gives 
\[
\Delta_{B}\left(\rho\right)\leq\left\Vert B\right\Vert \left\Vert \rho-\sigma\right\Vert _{1}.
\]
Taking the infimum over $\sigma\in\mathcal{P}$ yields 
\[
\Delta_{B}\left(\rho\right)\leq\left\Vert B\right\Vert \inf_{\sigma\in\mathcal{P}}\left\Vert \rho-\sigma\right\Vert _{1}.
\]
Combining this with \prettyref{eq:2-1} gives \prettyref{eq:2-5}.

We now prove (2). Let $\rho$ be a state on $\mathcal{H}$ and let
$\rho_{r}$ be its marginal on $H_{r}$. Set 
\[
\sigma_{*}:=\rho_{1}\otimes\cdots\otimes\rho_{n}\in\mathcal{P}.
\]
Then 
\[
\inf_{\sigma\in\mathcal{P}}\left\Vert \rho-\sigma\right\Vert _{1}\leq\left\Vert \rho-\sigma_{*}\right\Vert _{1}.
\]
The quantum Pinsker inequality with natural logarithm states that
\[
D\left(\rho\Vert\sigma_{*}\right)\geq\frac{1}{2}\left\Vert \rho-\sigma_{*}\right\Vert ^{2}_{1}.
\]
Since 
\[
I_{\mathrm{tot}}\left(\rho\right)=D\left(\rho\middle\Vert\sigma_{*}\right),
\]
we obtain 
\[
I_{\mathrm{tot}}\left(\rho\right)\geq\frac{1}{2}\left\Vert \rho-\sigma_{*}\right\Vert ^{2}_{1}\geq\frac{1}{2}\left(\inf_{\sigma\in\mathcal{P}}\left\Vert \rho-\sigma\right\Vert _{1}\right)^{2}.
\]
Applying \prettyref{eq:2-5} to the right hand side gives 
\[
I_{\mathrm{tot}}\left(\rho\right)\geq\frac{1}{2}\frac{\Delta_{B}\left(\rho\right)^{2}}{m+\sum_{1\leq i<j\leq m}\phi^{\left(n\right)}_{ij}},
\]
which proves \prettyref{eq:2-6}.
\end{proof}

\begin{example}[CHSH total correlation bound]
\label{exa:3-2} We illustrate \prettyref{thm:3-1} in the bipartite
case $n=2$ using the standard CHSH operator. Let $H_{1}=H_{2}=\mathbb{C}^{2}$
and let $X,Z$ be the Pauli matrices 
\[
X=\begin{pmatrix}0 & 1\\
1 & 0
\end{pmatrix},\qquad Z=\begin{pmatrix}1 & 0\\
0 & -1
\end{pmatrix}.
\]
Define local observables 
\[
A_{1}:=Z,\qquad A_{2}:=X,\qquad B_{1}:=\frac{Z+X}{\sqrt{2}},\qquad B_{2}:=\frac{Z-X}{\sqrt{2}}.
\]
Each of $A_{1},A_{2},B_{1},B_{2}$ is a self-adjoint unitary, hence
a self-adjoint contraction.

We form the CHSH operator 
\[
\mathcal{B}:=A_{1}\otimes B_{1}+A_{1}\otimes B_{2}+A_{2}\otimes B_{1}-A_{2}\otimes B_{2}
\]
on $H_{1}\otimes H_{2}$. To place this in the form of \prettyref{prop:2-1},
write 
\[
\mathcal{B}=\sum^{4}_{i=1}a^{(1)}_{i}\otimes a^{(2)}_{i},
\]
with $a^{(1)}_{1}=A_{1}$, $a^{(2)}_{1}=B_{1}$, $a^{(1)}_{2}=A_{1}$,
$a^{(2)}_{2}=B_{2}$, $a^{(1)}_{3}=A_{2}$, $a^{(2)}_{3}=B_{1}$,
$a^{(1)}_{4}=A_{2}$, $a^{(2)}_{4}=-B_{2}$. Each $a^{(r)}_{i}$ is
a self-adjoint contraction, so the hypotheses of \prettyref{prop:2-1}
and \prettyref{thm:3-1} hold with $n=2$ and $m=4$ and $B=\mathcal{B}$.

\smallskip{}

\emph{(i) Product threshold.} Let $\mathcal{P}$ denote the set of
product states on $H_{1}\otimes H_{2}$ as in \prettyref{eq:2-2}.
For a product state $\sigma=\sigma^{(1)}\otimes\sigma^{(2)}$, let
$r=(r_{x},r_{y},r_{z})$ and $s=(s_{x},s_{y},s_{z})$ be the Bloch
vectors of $\sigma^{(1)}$ and $\sigma^{(2)}$, so that 
\begin{align*}
\mathrm{Tr}\left(\sigma^{(1)}X\right) & =r_{x},\quad\mathrm{Tr}\left(\sigma^{(1)}Z\right)=r_{z},\\
\mathrm{Tr}\left(\sigma^{(2)}X\right) & =s_{x},\quad\mathrm{Tr}\left(\sigma^{(2)}Z\right)=s_{z},
\end{align*}
with $r^{2}_{x}+r^{2}_{y}+r^{2}_{z}\leq1$ and $s^{2}_{x}+s^{2}_{y}+s^{2}_{z}\leq1$.
Then 
\[
\mathrm{Tr}\left(\sigma^{(2)}B_{1}\right)=\frac{s_{z}+s_{x}}{\sqrt{2}},\qquad\mathrm{Tr}\left(\sigma^{(2)}B_{2}\right)=\frac{s_{z}-s_{x}}{\sqrt{2}}.
\]
A direct computation gives 
\begin{align*}
\mathrm{Tr}\left(\sigma\mathcal{B}\right) & =\mathrm{Tr}\left(\sigma^{(1)}A_{1}\right)\mathrm{Tr}\left(\sigma^{(2)}\left(B_{1}+B_{2}\right)\right)+\mathrm{Tr}\left(\sigma^{(1)}A_{2}\right)\mathrm{Tr}\left(\sigma^{(2)}\left(B_{1}-B_{2}\right)\right)\\
 & =r_{z}\sqrt{2}s_{z}+r_{x}\sqrt{2}s_{x}=\sqrt{2}\left(r_{x}s_{x}+r_{z}s_{z}\right).
\end{align*}
Let $r'=(r_{x},r_{z})$ and $s'=(s_{x},s_{z})$. Since $\left|r'\right|\leq1$
and $\left|s'\right|\leq1$, the Cauchy-Schwarz inequality yields
\[
\left|\mathrm{Tr}\left(\sigma\mathcal{B}\right)\right|\leq\sqrt{2}\left|r'\cdot s'\right|\leq\sqrt{2}.
\]
Thus 
\[
\Gamma_{\mathrm{prod}}(\mathcal{B})=\sup_{\sigma\in\mathcal{P}}\mathrm{Tr}\left(\sigma\mathcal{B}\right)\leq\sqrt{2}.
\]
Equality holds, for example, when $r'$ and $s'$ are unit vectors
and parallel in the $x$--$z$ plane, so we have 
\[
\Gamma_{\mathrm{prod}}(\mathcal{B})=\sqrt{2}.
\]
Here $\Gamma_{\mathrm{prod}}(\mathcal{B})$ is the supremum over quantum
product states for this fixed choice of local observables.

\smallskip{}

\emph{(ii) A state with strictly positive excess.} It is well known
(and can be checked directly) that there exists a two-qubit state
$\rho_{*}$ such that 
\[
\mathrm{Tr}\left(\rho_{*}\mathcal{B}\right)=2\sqrt{2}
\]
and $\left\Vert \mathcal{B}\right\Vert =2\sqrt{2}$; one choice is
the maximally entangled Bell state $\rho_{*}=\left|\Phi^{+}\right\rangle \left\langle \Phi^{+}\right|$,
where 
\[
\left|\Phi^{+}\right\rangle =\frac{1}{\sqrt{2}}\left(\left|00\right\rangle +\left|11\right\rangle \right).
\]
For this state, 
\[
\Delta_{\mathcal{B}}(\rho_{*})=\left(\mathrm{Tr}\left(\rho_{*}\mathcal{B}\right)-\Gamma_{\mathrm{prod}}(\mathcal{B})\right)_{+}=\left(2\sqrt{2}-\sqrt{2}\right)_{+}=\sqrt{2}>0.
\]

\smallskip{}

\emph{(iii) Defect denominator.} In this example $n=2$, so for $1\leq i<j\leq4$
the defect weights from \prettyref{prop:2-1} take the form 
\[
\phi^{(2)}_{ij}=\frac{1}{2}\left\Vert \left\{ a^{(1)}_{i},a^{(1)}_{j}\right\} \right\Vert \left\Vert \left\{ a^{(2)}_{i},a^{(2)}_{j}\right\} \right\Vert +\frac{1}{2}\left\Vert \left[a^{(1)}_{i},a^{(1)}_{j}\right]\right\Vert \left\Vert \left[a^{(2)}_{i},a^{(2)}_{j}\right]\right\Vert .
\]
Using the Pauli relations 
\[
\left\{ P,P\right\} =2I,\quad\left[P,P\right]=0,\qquad\left\{ P,Q\right\} =0,\quad\left[P,Q\right]\neq0\ \text{for }P\neq Q,
\]
and the fact that $A_{1},A_{2},B_{1},B_{2}$ are linear combinations
of $X$ and $Z$, a short computation shows that 
\[
\left\{ B_{1},B_{2}\right\} =0,\qquad\left\Vert \left[B_{1},B_{2}\right]\right\Vert =2,
\]
and 
\[
\phi^{(2)}_{14}=\phi^{(2)}_{23}=2,\qquad\phi^{(2)}_{12}=\phi^{(2)}_{13}=\phi^{(2)}_{24}=\phi^{(2)}_{34}=0.
\]
Hence 
\[
\sum_{1\leq i<j\leq4}\phi^{(2)}_{ij}=4,\qquad m+\sum_{1\leq i<j\leq4}\phi^{(2)}_{ij}=4+4=8.
\]

\smallskip{}

\emph{(iv) Information-theoretic lower bound.} Let $\rho_{1}$ and
$\rho_{2}$ be the marginals of $\rho_{*}$ on $H_{1}$ and $H_{2}$,
and let 
\[
I_{\mathrm{tot}}(\rho_{*})=D\left(\rho_{*}\Vert\rho_{1}\otimes\rho_{2}\right)
\]
be the total correlation. Applying \prettyref{thm:3-1} with $B=\mathcal{B}$
and $\rho=\rho_{*}$ yields 
\[
I_{\mathrm{tot}}(\rho_{*})\geq\frac{1}{2}\frac{\Delta_{\mathcal{B}}(\rho_{*})^{2}}{m+\sum_{1\leq i<j\leq4}\phi^{(2)}_{ij}}=\frac{1}{2}\cdot\frac{\left(\sqrt{2}\right)^{2}}{8}=\frac{1}{8}.
\]
Thus, in this CHSH configuration, the fact that $\rho_{*}$ exceeds
the product threshold by $\sqrt{2}$ already forces a strictly positive,
explicit lower bound 
\[
I_{\mathrm{tot}}(\rho_{*})\geq\frac{1}{8}
\]
on its total correlation. 
\end{example}

\begin{rem}
The restriction to $n=2$ in \prettyref{exa:3-2} is only for explicitness.
For $n>2$ one may embed the same construction by replacing the bipartite
CHSH operator $B$ on $H_{1}\otimes H_{2}$ with 
\[
\widetilde{B}:=B\otimes I_{H_{3}}\otimes\cdots\otimes I_{H_{n}},
\]
and then considering states of the form 
\[
\rho_{*}\otimes\tau_{3}\otimes\cdots\otimes\tau_{n}.
\]
For this tensor-extension construction one still has $\Delta_{\widetilde{B}}\left(\rho_{*}\otimes\tau_{3}\otimes\cdots\otimes\tau_{n}\right)>0$,
and the defect denominator 
\[
m+\sum_{1\leq i<j\leq m}\phi^{\left(n\right)}_{ij}
\]
coincides with the bipartite case. Constructing genuinely multipartite
examples, in which all tensor factors enter nontrivially and both
the product threshold and the defect denominator can be evaluated
in closed form, would require a more involved optimization over product
states and is beyond the scope of this paper. 
\end{rem}

The information bound from \prettyref{thm:3-1} still depends on the
product threshold $\Gamma_{\mathrm{prod}}\left(B\right)$. The next
theorem shows that under a simple uniform $\ell^{2}$ bound on the
local expectation vectors this threshold is controlled by a purely
local quantity.
\begin{thm}
\label{thm:3-4} In the setting of \prettyref{prop:2-1}, assume that
$n\geq2$. Let $\mathcal{P}$ be the set of product states on $\mathcal{H}$
as in \prettyref{eq:2-2}. Suppose that for each $r\in\left\{ 1,\dots,n\right\} $
there exists a constant $C_{r}\geq0$ such that 
\begin{equation}
\sum^{m}_{i=1}\left|\mathrm{Tr}\left(\sigma^{\left(r\right)}a^{\left(r\right)}_{i}\right)\right|^{2}\leq C_{r}\label{eq:3-6}
\end{equation}
for every state $\sigma^{\left(r\right)}$ on $H_{r}$. Then the product
threshold in \prettyref{eq:2-3} satisfies 
\begin{equation}
\Gamma_{\mathrm{prod}}\left(B\right)\leq\prod^{n}_{r=1}C^{1/2}_{r}.\label{eq:3-7}
\end{equation}
\end{thm}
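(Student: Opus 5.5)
For a product state $\sigma=\sigma^{(1)}\otimes\cdots\otimes\sigma^{(n)}\in\mathcal{P}$, the expectation factorizes:
\[
\mathrm{Tr}\left(\sigma B\right)=\sum^{m}_{i=1}\prod^{n}_{r=1}x^{(r)}_{i},\qquad x^{(r)}_{i}:=\mathrm{Tr}\left(\sigma^{(r)}a^{(r)}_{i}\right)\in[-1,1],
\]
where reality and the bound $|x^{(r)}_{i}|\leq1$ follow from self-adjointness and contractivity. The hypothesis \prettyref{eq:3-6} says exactly that each vector $x^{(r)}=(x^{(r)}_{1},\dots,x^{(r)}_{m})$ satisfies $\|x^{(r)}\|_{2}\leq C^{1/2}_{r}$. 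So the whole task reduces to a multilinear Hölder-type estimate on $\sum_{i}\prod_{r}x^{(r)}_{i}$ in terms of the $\ell^{2}$ norms of the $x^{(r)}$.

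We use $n\geq2$ as follows. Split off the first two factors:
\[
\left|\sum^{m}_{i=1}\prod^{n}_{r=1}x^{(r)}_{i}\right|\leq\sum^{m}_{i=1}|x^{(1)}_{i}||x^{(2)}_{i}|\prod^{n}_{r=3}|x^{(r)}_{i}|.
\]
The first two factors are handled by Cauchy--Schwarz. For $r\geq3$ we only have the sup-norm control $\|x^{(r)}\|_{\infty}\leq\|x^{(r)}\|_{2}\leq C^{1/2}_{r}$, which follows from \prettyref{eq:3-6} since a single term of the sum is at most the whole sum. Pulling out $\prod_{r\geq3}\|x^{(r)}\|_{\infty}$ and then applying Cauchy--Schwarz gives
\[
\left|\mathrm{Tr}(\sigma B)\right|\leq\|x^{(1)}\|_{2}\|x^{(2)}\|_{2}\prod^{n}_{r=3}\|x^{(r)}\|_{\infty}\leq\prod^{n}_{r=1}C^{1/2}_{r}.
\]
This is precisely where $n\geq2$ is needed: with one factor we would only get $\ell^{1}$ control, not an $\ell^{2}$ bound. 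Taking the supremum over $\sigma\in\mathcal{P}$ gives \prettyref{eq:3-7}.

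The step that needs the most care is the bound $\|x^{(r)}\|_{\infty}\leq C^{1/2}_{r}$ for $r\geq3$. One might instead be tempted to use $|x^{(r)}_{i}|\leq1$, but that yields $C^{1/2}_{1}C^{1/2}_{2}$. This is a stronger bound only when $C_{r}\geq1$ for $r\geq3$, so it does not give \prettyref{eq:3-7} when some $C_{r}<1$. Using $|x^{(r)}_{i}|^{2}\leq\sum_{j}|x^{(r)}_{j}|^{2}\leq C_{r}$ handles all cases uniformly. The remaining points are routine: reality of the expectations and the fact that the trace of a product state against a product operator factorizes. The latter holds for bounded operators on arbitrary Hilbert spaces because the $\sigma^{(r)}$ are trace class.
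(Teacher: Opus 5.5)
Your proposal is correct and matches the paper's proof: factorize $\mathrm{Tr}(\sigma B)$ over the product state, bound each $|x^{(r)}_i|$ for $r\geq3$ by $\Vert x^{(r)}\Vert_2\leq C_r^{1/2}$, and apply Cauchy--Schwarz to the first two factors. One small imprecision in your side remark, which does not affect the argument: the cruder bound $C_1^{1/2}C_2^{1/2}$ already implies the stated estimate whenever $\prod_{r\geq3}C_r\geq1$ (with $C_1C_2>0$), not only when every $C_r\geq1$ for $r\geq3$.
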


\begin{proof}
Let $\sigma=\sigma^{\left(1\right)}\otimes\cdots\otimes\sigma^{\left(n\right)}\in\mathcal{P}$
be a product state. For each $r\in\left\{ 1,\dots,n\right\} $ and
$i\in\left\{ 1,\dots,m\right\} $ set 
\[
x^{\left(r\right)}_{i}:=\mathrm{Tr}\left(\sigma^{\left(r\right)}a^{\left(r\right)}_{i}\right),\qquad x^{\left(r\right)}:=\left(x^{\left(r\right)}_{1},\dots,x^{\left(r\right)}_{m}\right)\in\mathbb{C}^{m}.
\]
Then 
\[
\mathrm{Tr}\left(\sigma B\right)=\sum^{m}_{i=1}\prod^{n}_{r=1}x^{\left(r\right)}_{i}.
\]

By \prettyref{eq:3-6}, each $x^{\left(r\right)}$ obeys 
\[
\Vert x^{\left(r\right)}\Vert^{2}_{2}=\sum^{m}_{i=1}|x^{\left(r\right)}_{i}|^{2}\leq C_{r},\qquad\text{so}\qquad\Vert x^{\left(r\right)}\Vert_{2}\leq C^{1/2}_{r}.
\]

We estimate the absolute value of the multilinear form. For each $i$,
\[
\left|\prod\nolimits^{n}_{r=1}x^{\left(r\right)}_{i}\right|=|x^{\left(1\right)}_{i}x^{\left(2\right)}_{i}|\prod\nolimits^{n}_{r=3}|x^{\left(r\right)}_{i}|\leq|x^{\left(1\right)}_{i}x^{\left(2\right)}_{i}|\prod\nolimits^{n}_{r=3}\Vert x^{\left(r\right)}\Vert_{2},
\]
since $|x^{\left(r\right)}_{i}|\leq\Vert x^{\left(r\right)}\Vert_{2}$
for each $r$. Summing over $i$ gives 
\[
\sum\nolimits^{m}_{i=1}\left|\prod\nolimits^{n}_{r=1}x^{\left(r\right)}_{i}\right|\leq\left(\prod\nolimits^{n}_{r=3}\Vert x^{\left(r\right)}\Vert_{2}\right)\sum\nolimits^{m}_{i=1}|x^{\left(1\right)}_{i}x^{\left(2\right)}_{i}|.
\]
By the Cauchy-Schwarz inequality, 
\[
\sum^{m}_{i=1}|x^{\left(1\right)}_{i}x^{\left(2\right)}_{i}|\leq\Vert x^{\left(1\right)}\Vert_{2}\Vert x^{\left(2\right)}\Vert_{2},
\]
so overall 
\[
\sum\nolimits^{m}_{i=1}\left|\prod\nolimits^{n}_{r=1}x^{\left(r\right)}_{i}\right|\leq\prod\nolimits^{n}_{r=1}\Vert x^{\left(r\right)}\Vert_{2}\leq\prod\nolimits^{n}_{r=1}C^{1/2}_{r}.
\]

It follows that 
\[
\left|\mathrm{Tr}\left(\sigma B\right)\right|\leq\prod^{n}_{r=1}C^{1/2}_{r}
\]
for every product state $\sigma\in\mathcal{P}$. In particular, 
\[
\Gamma_{\mathrm{prod}}\left(B\right)=\sup_{\sigma\in\mathcal{P}}\mathrm{Tr}\left(\sigma B\right)\leq\sup_{\sigma\in\mathcal{P}}\left|\mathrm{Tr}\left(\sigma B\right)\right|\leq\prod^{n}_{r=1}C^{1/2}_{r},
\]
which proves the theorem. 
\end{proof}

Combining the product threshold estimate with \prettyref{thm:3-1}
yields the explicit form of the correlation bound.
\begin{cor}
\label{cor:3-5} In the setting of \prettyref{thm:3-4}, let $\rho$
be a state on $\mathcal{H}$. Then 
\[
I_{\mathrm{tot}}\left(\rho\right)\geq\frac{1}{2}\frac{\left(\mathrm{Tr}\left(\rho B\right)-\prod^{n}_{r=1}C^{1/2}_{r}\right)^{2}_{+}}{m+\sum_{1\leq i<j\leq m}\phi^{\left(n\right)}_{ij}}.
\]
\end{cor}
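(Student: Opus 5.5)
The plan is to combine \prettyref{thm:3-1}(2) with \prettyref{thm:3-4} through a monotonicity argument on the numerator. Write $K:=\prod_{r=1}^{n}C_{r}^{1/2}$ and $\Lambda:=m+\sum_{1\leq i<j\leq m}\phi^{(n)}_{ij}$. First I would note that $\Lambda>0$: it is at least $m\geq1$, since every $\phi^{(n)}_{ij}$ is a sum of products of norms and hence nonnegative. So the quotient in the statement is well defined, and \prettyref{eq:2-6} reads
\[
I_{\mathrm{tot}}(\rho)\geq\frac{1}{2}\frac{\Delta_{B}(\rho)^{2}}{\Lambda}.
\]

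Next, \prettyref{thm:3-4} gives $\Gamma_{\mathrm{prod}}(B)\leq K$. Hence
\[
\mathrm{Tr}(\rho B)-K\leq\mathrm{Tr}(\rho B)-\Gamma_{\mathrm{prod}}(B).
\]
The map $t\mapsto t_{+}$ is nondecreasing on $\mathbb{R}$, so
\[
\left(\mathrm{Tr}(\rho B)-K\right)_{+}\leq\Delta_{B}(\rho).
\]
Both sides are nonnegative, and squaring is monotone on $[0,\infty)$. This gives
\[
\left(\mathrm{Tr}(\rho B)-K\right)_{+}^{2}\leq\Delta_{B}(\rho)^{2}.
\]
Dividing by $2\Lambda>0$ and chaining with \prettyref{eq:2-6} yields the claimed inequality.

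Two side remarks, neither of which is an obstacle. First, $\mathrm{Tr}(\rho B)$ is real because $B$ is self-adjoint, so the positive part makes sense. Second, $\Gamma_{\mathrm{prod}}(B)$ is finite, since it is bounded by $\|B\|$, so the comparison of the two positive parts is between real numbers. The only point needing care is to apply monotonicity of $t\mapsto t_{+}^{2}$ rather than comparing the raw, possibly negative, differences.
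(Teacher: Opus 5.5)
Your proposal is correct and matches the paper's proof. In both, Theorem~\ref{thm:3-4} gives $\Gamma_{\mathrm{prod}}(B)\leq\prod_{r=1}^{n}C_r^{1/2}$, hence $\Delta_B(\rho)\geq\bigl(\mathrm{Tr}(\rho B)-\prod_{r=1}^{n}C_r^{1/2}\bigr)_+$, and this is inserted into \eqref{eq:2-6}; your remarks on positivity of the denominator and monotonicity of $t\mapsto t_+^2$ only make explicit what the paper leaves implicit.
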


\begin{proof}
By \prettyref{thm:3-4}, 
\[
\Gamma_{\mathrm{prod}}\left(B\right)\leq\prod^{n}_{r=1}C^{1/2}_{r}.
\]
Therefore 
\[
\Delta_{B}\left(\rho\right)=\left(\mathrm{Tr}\left(\rho B\right)-\Gamma_{\mathrm{prod}}\left(B\right)\right)_{+}\geq\left(\mathrm{Tr}\left(\rho B\right)-\prod^{n}_{r=1}C^{1/2}_{r}\right)_{+}.
\]
Applying \prettyref{thm:3-1}, we obtain 
\[
I_{\mathrm{tot}}\left(\rho\right)\geq\frac{1}{2}\frac{\Delta_{B}\left(\rho\right)^{2}}{m+\sum_{1\leq i<j\leq m}\phi^{\left(n\right)}_{ij}}\geq\frac{1}{2}\frac{\left(\mathrm{Tr}\left(\rho B\right)-\prod^{n}_{r=1}C^{1/2}_{r}\right)^{2}_{+}}{m+\sum_{1\leq i<j\leq m}\phi^{\left(n\right)}_{ij}},
\]
which proves the corollary. 
\end{proof}

\begin{example}
\label{exa:Pauli-site-bessel} We give a concrete family for which
the site condition \prettyref{eq:3-6} holds with a sharp, nontrivial
constant and for which the resulting product threshold and defect
denominator can be evaluated explicitly.

Assume that each local Hilbert space is a qubit, $H_{r}=\mathbb{C}^{2}$,
$r\in\left\{ 1,\dots,n\right\} $, and let 
\[
X=\begin{pmatrix}0 & 1\\
1 & 0
\end{pmatrix},\qquad Y=\begin{pmatrix}0 & -i\\
i & 0
\end{pmatrix},\qquad Z=\begin{pmatrix}1 & 0\\
0 & -1
\end{pmatrix}
\]
be the Pauli matrices. For each site $r$, set 
\[
a^{\left(r\right)}_{1}:=X,\qquad a^{\left(r\right)}_{2}:=Y,\qquad a^{\left(r\right)}_{3}:=Z.
\]
Thus $m=3$, each $a^{\left(r\right)}_{i}$ is a self-adjoint unitary,
and 
\[
B=\sum^{3}_{i=1}a^{\left(1\right)}_{i}\otimes\cdots\otimes a^{\left(n\right)}_{i}=X^{\otimes n}+Y^{\otimes n}+Z^{\otimes n}.
\]

We first verify \prettyref{eq:3-6}. Let $\sigma^{\left(r\right)}$
be a state on $H_{r}$. Writing $\sigma^{\left(r\right)}$ in Bloch
form, 
\[
\sigma^{\left(r\right)}=\frac{1}{2}\left(I+x_{r}X+y_{r}Y+z_{r}Z\right),
\]
with $x^{2}_{r}+y^{2}_{r}+z^{2}_{r}\leq1$, we have 
\[
\mathrm{Tr}\left(\sigma^{\left(r\right)}X\right)=x_{r},\qquad\mathrm{Tr}\left(\sigma^{\left(r\right)}Y\right)=y_{r},\qquad\mathrm{Tr}\left(\sigma^{\left(r\right)}Z\right)=z_{r}.
\]
Therefore 
\[
\sum^{3}_{i=1}\left|\mathrm{Tr}\left(\sigma^{\left(r\right)}a^{\left(r\right)}_{i}\right)\right|^{2}=x^{2}_{r}+y^{2}_{r}+z^{2}_{r}\leq1.
\]
Hence \prettyref{eq:3-6} holds at every site with the sharp constant
\[
C_{r}=1,\qquad r\in\left\{ 1,\dots,n\right\} .
\]

It follows from \prettyref{eq:3-7} that 
\[
\Gamma_{\mathrm{prod}}\left(B\right)\leq\prod^{n}_{r=1}C^{1/2}_{r}=1.
\]
On the other hand, if $\omega_{0}:=\left|0\right\rangle \left\langle 0\right|$
denotes the rank-one projection onto the $+1$ eigenspace of $Z$,
then for the product state $\sigma_{0}:=\omega^{\otimes n}_{0}$ we
have 
\[
\mathrm{Tr}\left(\sigma_{0}X^{\otimes n}\right)=0,\qquad\mathrm{Tr}\left(\sigma_{0}Y^{\otimes n}\right)=0,\qquad\mathrm{Tr}\left(\sigma_{0}Z^{\otimes n}\right)=1,
\]
and hence $\mathrm{Tr}\left(\sigma_{0}B\right)=1$. Therefore $\Gamma_{\mathrm{prod}}\left(B\right)=1$. 

We next compute the defect denominator from \prettyref{prop:2-1}.
For each pair $1\leq i<j\leq3$, the local pair $\left(a^{\left(r\right)}_{i},a^{\left(r\right)}_{j}\right)$
consists of two distinct Pauli matrices. Hence for every site $r$,
\[
\left\Vert \left\{ a^{\left(r\right)}_{i},a^{\left(r\right)}_{j}\right\} \right\Vert =0,\qquad\left\Vert \left[a^{\left(r\right)}_{i},a^{\left(r\right)}_{j}\right]\right\Vert =2.
\]
Thus in the defining sum for $\phi^{\left(n\right)}_{ij}$, every
term vanishes unless $S=\left\{ 1,\dots,n\right\} $. This subset
has even cardinality exactly when $n$ is even. Therefore 
\[
\phi^{\left(n\right)}_{ij}=\begin{cases}
0, & n\ \text{odd},\\
2^{1-n}\cdot2^{n}=2, & n\ \text{even},
\end{cases}\qquad1\leq i<j\leq3.
\]
Since there are $\binom{3}{2}=3$ such pairs, we obtain 
\[
m+\sum_{1\leq i<j\leq3}\phi^{\left(n\right)}_{ij}=\begin{cases}
3, & n\ \text{odd},\\
9, & n\ \text{even}.
\end{cases}
\]

Consequently, for every state $\rho$ on $H_{1}\otimes\cdots\otimes H_{n}$,
\[
I_{\mathrm{tot}}\left(\rho\right)\geq\begin{cases}
\frac{1}{6}\left(\mathrm{Tr}\left(\rho B\right)-1\right)^{2}_{+}, & n\ \text{odd},\\
\frac{1}{18}\left(\mathrm{Tr}\left(\rho B\right)-1\right)^{2}_{+}, & n\ \text{even},
\end{cases}
\]
by \prettyref{cor:3-5}.

Thus this Pauli configuration gives a concrete family in which the
site condition \prettyref{eq:3-6}, the product threshold, the parity-defect
denominator, and the resulting static correlation bounds are all explicit. 
\end{example}

\section{Correlation Decay Under Local Noise}\label{sec:4}

We next pass from the static estimate to a dynamical setting. The
preceding lower bound on total correlation converts any quantitative
decay of $I_{\mathrm{tot}}$ under a product quantum Markov semigroup
into a corresponding decay estimate for the excess $\Delta_{B}\left(\rho_{t}\right)$
over the product threshold. In particular, fixing a tolerance $\varepsilon>0$,
we obtain an explicit upper bound on the times $t$ for which $\Delta_{B}(\rho_{t})$
can still exceed $\varepsilon$; we refer to this as a survival time
bound for the excess above the product threshold. This yields a decay-to-threshold
estimate in terms of the semigroup rate and the same parity-defect
denominator that appears in \prettyref{thm:3-1}, placing our result
in the general setting of quantitative entropy decay and functional
inequalities for quantum Markov semigroups; see for example \cite{MR4282395,MR4496596,MR4703456,MR4704528}.
\begin{prop}
\label{prop:4-1} In the setting of \prettyref{prop:2-1} and \prettyref{thm:3-1},
let $\left(\mathcal{T}_{t}\right)_{t\geq0}$ be a one parameter semigroup
of completely positive trace preserving maps on $\mathcal{H}$, and
write 
\[
\rho_{t}:=\mathcal{T}_{t}\left(\rho_{0}\right)
\]
for the evolution of an initial state $\rho_{0}$. Assume that there
exists a constant $\lambda>0$ such that for every initial state $\rho_{0}$
on $\mathcal{H}$ and every $t\geq0$, the total correlation 
\[
I_{\mathrm{tot}}\left(\rho\right):=D\left(\rho\Vert\rho_{1}\otimes\cdots\otimes\rho_{n}\right)
\]
satisfies 
\begin{equation}
I_{\mathrm{tot}}\left(\rho_{t}\right)\leq\mathrm{e}^{-2\lambda t}I_{\mathrm{tot}}\left(\rho_{0}\right).\label{eq:4-1}
\end{equation}
Let $B$ be the multipartite observable from \prettyref{prop:2-1},
and let $\Gamma_{\mathrm{prod}}\left(B\right)$, $\Delta_{B}\left(\rho\right)$
be the associated product threshold and excess from \prettyref{thm:3-1}.
Then the following hold for every initial state $\rho_{0}$. 
\begin{enumerate}
\item For every $t\geq0$, 
\begin{equation}
\Delta_{B}\left(\rho_{t}\right)\leq\mathrm{e}^{-\lambda t}\left(2I_{\mathrm{tot}}\left(\rho_{0}\right)\right)^{1/2}\left(m+\sum_{1\leq i<j\leq m}\phi^{\left(n\right)}_{ij}\right)^{1/2}.\label{eq:4-2}
\end{equation}
In particular, let $\varepsilon>0$ and assume $\Delta_{B}\left(\rho_{0}\right)>0$.
Then every time $t\geq0$ satisfying 
\begin{equation}
t\geq\frac{1}{\lambda}\log\left(\frac{\left(2I_{\mathrm{tot}}\left(\rho_{0}\right)\right)^{1/2}\left(m+\sum_{1\leq i<j\leq m}\phi^{\left(n\right)}_{ij}\right)^{1/2}}{\varepsilon}\right)\label{eq:4-3}
\end{equation}
obeys $\Delta_{B}\left(\rho_{t}\right)\leq\varepsilon$. 
\item The squared excess is integrable in time and satisfies 
\begin{equation}
\int^{\infty}_{0}\Delta_{B}\left(\rho_{t}\right)^{2}dt\leq\frac{1}{\lambda}I_{\mathrm{tot}}\left(\rho_{0}\right)\left(m+\sum_{1\leq i<j\leq m}\phi^{\left(n\right)}_{ij}\right).\label{eq:4-4}
\end{equation}
That is, the total time integrated squared excess above the product
threshold that is detectable by $B$ along the orbit $\left(\rho_{t}\right)_{t\geq0}$
is controlled by the initial total correlation and the parity defect
denominator. 
\end{enumerate}
\end{prop}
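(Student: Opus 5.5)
The plan is to combine the static bound \eqref{eq:2-6} of \prettyref{thm:3-1}, applied at each time $t$ to the state $\rho_{t}$, with the assumed exponential decay \eqref{eq:4-1}. Throughout, write
\[
K:=m+\sum_{1\leq i<j\leq m}\phi^{\left(n\right)}_{ij}
\]
for the parity defect denominator. Note that $K\geq m\geq1>0$, so dividing by $K$ is harmless. Also, $\Gamma_{\mathrm{prod}}(B)$ is a fixed number that does not depend on $t$. Hence $\Delta_{B}(\rho_{t})$ is simply \prettyref{thm:3-1}'s excess evaluated at the state $\rho_{t}$, and no new structural input is required.

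For part (1), fix $t\geq0$. Since $\mathcal{T}_{t}$ is completely positive and trace preserving, $\rho_{t}$ is a state, so \eqref{eq:2-6} applies and gives $\Delta_{B}(\rho_{t})^{2}\leq2K\,I_{\mathrm{tot}}(\rho_{t})$. Inserting \eqref{eq:4-1} gives
\[
\Delta_{B}(\rho_{t})^{2}\leq2K\mathrm{e}^{-2\lambda t}I_{\mathrm{tot}}(\rho_{0}).
\]
Taking square roots, which is legitimate because everything is nonnegative, yields \eqref{eq:4-2}. For the survival time statement, let $M:=(2I_{\mathrm{tot}}(\rho_{0}))^{1/2}K^{1/2}$. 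The hypothesis $\Delta_{B}(\rho_{0})>0$ together with \eqref{eq:2-6} at $t=0$ forces $I_{\mathrm{tot}}(\rho_{0})>0$, so $M>0$ and the logarithm in \eqref{eq:4-3} is well defined. If $t$ satisfies \eqref{eq:4-3}, then $\mathrm{e}^{-\lambda t}\leq\varepsilon/M$, and \eqref{eq:4-2} gives $\Delta_{B}(\rho_{t})\leq\varepsilon$. If the logarithm is negative, every $t\geq0$ qualifies, and the same computation still applies.

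For part (2), integrate the squared form of the inequality above:
\[
\int_{0}^{\infty}\Delta_{B}(\rho_{t})^{2}\,dt\leq2K\,I_{\mathrm{tot}}(\rho_{0})\int_{0}^{\infty}\mathrm{e}^{-2\lambda t}\,dt=\frac{K\,I_{\mathrm{tot}}(\rho_{0})}{\lambda},
\]
which is exactly \eqref{eq:4-4}. The integrand must be measurable, or the integral read as an upper integral. I would justify this by noting that $t\mapsto\mathrm{Tr}(\rho_{t}B)$ is measurable, since it is continuous for a strongly continuous semigroup, and the positive part of a measurable function is measurable.

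The only real obstacle is this technical point in part (2). If the semigroup is not assumed continuous, I would state the bound for the upper Lebesgue integral, which the pointwise majorant by an integrable function still controls. A second minor point is the case $I_{\mathrm{tot}}(\rho_{0})=+\infty$, which can occur in infinite dimensions. In that case all bounds are trivially true, so I would dispose of it first and assume $I_{\mathrm{tot}}(\rho_{0})<\infty$ in the rest of the argument.
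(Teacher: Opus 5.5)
Your proposal is correct and follows essentially the same route as the paper's proof. You apply \eqref{eq:2-6} to $\rho_{t}$ to get $\Delta_{B}(\rho_{t})^{2}\leq 2K\,I_{\mathrm{tot}}(\rho_{t})$, insert the decay hypothesis, take square roots for (1), and integrate $\mathrm{e}^{-2\lambda t}$ for (2). Your extra remarks on measurability, the case $I_{\mathrm{tot}}(\rho_{0})=+\infty$, and positivity of the logarithm's argument (via $\Delta_{B}(\rho_{0})>0$) are sound refinements that the paper leaves implicit.
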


\begin{proof}
Fix an initial state $\rho_{0}$ on $\mathcal{H}$ and set $\rho_{t}:=\mathcal{T}_{t}\left(\rho_{0}\right)$.
For brevity, write 
\[
M:=m+\sum_{1\leq i<j\leq m}\phi^{\left(n\right)}_{ij}.
\]

We first prove (1). By \prettyref{thm:3-1}, applied to $\rho_{t}$,
we have 
\[
I_{\mathrm{tot}}\left(\rho_{t}\right)\geq\frac{1}{2}\frac{\Delta_{B}\left(\rho_{t}\right)^{2}}{M}
\]
for every $t\geq0$. Equivalently, 
\[
\Delta_{B}\left(\rho_{t}\right)^{2}\leq2MI_{\mathrm{tot}}\left(\rho_{t}\right).
\]
Using the decay assumption \prettyref{eq:4-1} gives 
\[
\Delta_{B}\left(\rho_{t}\right)^{2}\leq2M\mathrm{e}^{-2\lambda t}I_{\mathrm{tot}}\left(\rho_{0}\right),
\]
and taking square roots yields 
\[
\Delta_{B}\left(\rho_{t}\right)\leq\mathrm{e}^{-\lambda t}\left(2I_{\mathrm{tot}}\left(\rho_{0}\right)\right)^{1/2}M^{1/2},
\]
which is \prettyref{eq:4-2}.

For the survival time statement, let $\varepsilon>0$ and assume $\Delta_{B}\left(\rho_{0}\right)>0$.
From \prettyref{eq:4-2} we have 
\[
\Delta_{B}\left(\rho_{t}\right)\leq\mathrm{e}^{-\lambda t}\left(2I_{\mathrm{tot}}\left(\rho_{0}\right)\right)^{1/2}M^{1/2}.
\]
Thus every time $t\geq0$ satisfying 
\[
\mathrm{e}^{-\lambda t}\left(2I_{\mathrm{tot}}\left(\rho_{0}\right)\right)^{1/2}M^{1/2}\leq\varepsilon
\]
obeys $\Delta_{B}\left(\rho_{t}\right)\leq\varepsilon$. Solving this
inequality for $t$ gives 
\[
t\geq\frac{1}{\lambda}\log\left(\frac{\left(2I_{\mathrm{tot}}\left(\rho_{0}\right)\right)^{1/2}M^{1/2}}{\varepsilon}\right),
\]
which is \prettyref{eq:4-3}.

We now prove (2). Starting again from 
\[
\Delta_{B}\left(\rho_{t}\right)^{2}\leq2MI_{\mathrm{tot}}\left(\rho_{t}\right),
\]
and using \prettyref{eq:4-1}, we obtain 
\[
\Delta_{B}\left(\rho_{t}\right)^{2}\leq2M\mathrm{e}^{-2\lambda t}I_{\mathrm{tot}}\left(\rho_{0}\right),\qquad t\geq0.
\]
Integrating both sides over $t\in\left[0,\infty\right)$ yields 
\[
\int^{\infty}_{0}\Delta_{B}\left(\rho_{t}\right)^{2}dt\leq2MI_{\mathrm{tot}}\left(\rho_{0}\right)\int^{\infty}_{0}\mathrm{e}^{-2\lambda t}dt.
\]
Since $\int^{\infty}_{0}\mathrm{e}^{-2\lambda t}dt=\frac{1}{2\lambda}$,
we obtain \prettyref{eq:4-4} from this. 
\end{proof}

The preceding proposition becomes completely explicit for local depolarizing
noise. In this case the Heisenberg evolution acts diagonally on centered
observables, so the multipartite observable $B$ decays by an exact
scalar factor. This gives a concrete model in which the excess $\Delta_{B}\left(\rho_{t}\right)$
and the associated lower bound on total correlation can be written
in closed form.
\begin{example}
Assume in addition that each $H_{r}$ is finite dimensional, and let
\[
\tau_{r}\left(X\right):=\frac{1}{\dim H_{r}}\mathrm{Tr}\left(X\right),\qquad X\in B\left(H_{r}\right),
\]
be the normalized trace on $B\left(H_{r}\right)$. For each $r\in\left\{ 1,\dots,n\right\} $,
let $\left(\mathcal{T}^{\left(r\right)}_{t}\right)_{t\geq0}$ be the
depolarizing semigroup whose Heisenberg action is 
\[
\mathcal{T}^{\left(r\right)*}_{t}\left(X\right)=\mathrm{e}^{-t}X+\left(1-\mathrm{e}^{-t}\right)\tau_{r}\left(X\right)I_{H_{r}},\qquad X\in B\left(H_{r}\right).
\]
Set 
\[
\mathcal{T}_{t}:=\mathcal{T}^{\left(1\right)}_{t}\otimes\cdots\otimes\mathcal{T}^{\left(n\right)}_{t}
\]
on states over 
\[
\mathcal{H}=H_{1}\otimes\cdots\otimes H_{n}.
\]
Let 
\[
B=\sum^{m}_{i=1}a^{\left(1\right)}_{i}\otimes\cdots\otimes a^{\left(n\right)}_{i}
\]
be as in \prettyref{prop:2-1}, and assume moreover that 
\[
\tau_{r}\left(a^{\left(r\right)}_{i}\right)=0\qquad\text{for all }r\in\left\{ 1,\dots,n\right\} \text{ and }i\in\left\{ 1,\dots,m\right\} .
\]
Let $\Gamma_{\mathrm{prod}}\left(B\right)$ and $\Delta_{B}$ be as
in \prettyref{thm:3-1}. Then the following hold. 
\begin{enumerate}
\item The observable $B$ evolves by the exact formula 
\[
\mathcal{T}^{*}_{t}\left(B\right)=\mathrm{e}^{-nt}B.
\]
Consequently, for every initial state $\rho_{0}$ and every $t\geq0$,
if $\rho_{t}:=\mathcal{T}_{t}\left(\rho_{0}\right)$, then 
\[
\mathrm{Tr}\left(\rho_{t}B\right)=\mathrm{e}^{-nt}\mathrm{Tr}\left(\rho_{0}B\right).
\]
\item For every initial state $\rho_{0}$ and every $t\geq0$, 
\[
\Delta_{B}\left(\rho_{t}\right)=\left(\mathrm{Tr}\left(\rho_{t}B\right)-\Gamma_{\mathrm{prod}}\left(B\right)\right)_{+}=\left(\mathrm{e}^{-nt}\mathrm{Tr}\left(\rho_{0}B\right)-\Gamma_{\mathrm{prod}}\left(B\right)\right)_{+}.
\]
If $\Gamma_{\mathrm{prod}}\left(B\right)>0$ and $\mathrm{Tr}\left(\rho_{0}B\right)>\Gamma_{\mathrm{prod}}\left(B\right)$,
then 
\[
\Delta_{B}\left(\rho_{t}\right)>0\qquad\text{for all }0\leq t<\frac{1}{n}\log\left(\frac{\mathrm{Tr}\left(\rho_{0}B\right)}{\Gamma_{\mathrm{prod}}\left(B\right)}\right).
\]
If $\Gamma_{\mathrm{prod}}\left(B\right)\leq0$ and $\mathrm{Tr}\left(\rho_{0}B\right)>\Gamma_{\mathrm{prod}}\left(B\right)$,
then 
\[
\Delta_{B}\left(\rho_{t}\right)>0\qquad\text{for all }t\geq0.
\]
\item For every initial state $\rho_{0}$ and every $t\geq0$, 
\[
I_{\mathrm{tot}}\left(\rho_{t}\right)\geq\frac{1}{2}\frac{\left(\mathrm{e}^{-nt}\mathrm{Tr}\left(\rho_{0}B\right)-\Gamma_{\mathrm{prod}}\left(B\right)\right)^{2}_{+}}{m+\sum_{1\leq i<j\leq m}\phi^{\left(n\right)}_{ij}},
\]
where $I_{\mathrm{tot}}\left(\rho_{t}\right)$ is the total correlation
from \prettyref{thm:3-1}. 
\end{enumerate}
\end{example}

\begin{proof}
For each $r$ and $i$, the centering assumption gives 
\[
\mathcal{T}^{\left(r\right)*}_{t}\left(a^{\left(r\right)}_{i}\right)=\mathrm{e}^{-t}a^{\left(r\right)}_{i}
\]
because $\tau_{r}\left(a^{\left(r\right)}_{i}\right)=0$. Therefore
\[
\mathcal{T}^{*}_{t}\left(a^{\left(1\right)}_{i}\otimes\cdots\otimes a^{\left(n\right)}_{i}\right)=\mathrm{e}^{-nt}a^{\left(1\right)}_{i}\otimes\cdots\otimes a^{\left(n\right)}_{i},
\]
and summing over $i$ yields 
\[
\mathcal{T}^{*}_{t}\left(B\right)=\mathrm{e}^{-nt}B.
\]
Hence, for every initial state $\rho_{0}$, 
\[
\mathrm{Tr}\left(\rho_{t}B\right)=\mathrm{Tr}\left(\rho_{0}\mathcal{T}^{*}_{t}\left(B\right)\right)=\mathrm{e}^{-nt}\mathrm{Tr}\left(\rho_{0}B\right),
\]
which proves (1).

Statement (2) follows immediately from the definition 
\[
\Delta_{B}\left(\rho_{t}\right)=\left(\mathrm{Tr}\left(\rho_{t}B\right)-\Gamma_{\mathrm{prod}}\left(B\right)\right)_{+}
\]
and the formula from (1). If $\Gamma_{\mathrm{prod}}\left(B\right)>0$,
then 
\[
\mathrm{e}^{-nt}\mathrm{Tr}\left(\rho_{0}B\right)>\Gamma_{\mathrm{prod}}\left(B\right)
\]
is equivalent to 
\[
t<\frac{1}{n}\log\left(\frac{\mathrm{Tr}\left(\rho_{0}B\right)}{\Gamma_{\mathrm{prod}}\left(B\right)}\right),
\]
which gives the stated positivity interval. If $\Gamma_{\mathrm{prod}}\left(B\right)\leq0$
and $\mathrm{Tr}\left(\rho_{0}B\right)>\Gamma_{\mathrm{prod}}\left(B\right)$,
then for every $t\geq0$ we have 
\[
\mathrm{e}^{-nt}\mathrm{Tr}\left(\rho_{0}B\right)\geq\min\left\{ \mathrm{Tr}\left(\rho_{0}B\right),0\right\} >\Gamma_{\mathrm{prod}}\left(B\right),
\]
so $\Delta_{B}\left(\rho_{t}\right)>0$ for all $t\geq0$.

Finally, (3) follows by substituting the identity from (2) into the
lower bound from \prettyref{thm:3-1}. 
\end{proof}

\section{Concluding Remarks}

The results of \prettyref{sec:2} show that the multipartite sum 
\[
B=\sum^{m}_{i=1}a^{\left(1\right)}_{i}\otimes\cdots\otimes a^{\left(n\right)}_{i}
\]
carries a built-in parity structure. After expanding $B^{2}$, the
mixed terms do not generate an arbitrary family of tensor expressions:
the odd parity contributions cancel, and the surviving even parity
sector produces the defect weights $\phi^{\left(n\right)}_{ij}$.
These defect weights give a direct norm bound for $B$ in terms of
commutator and anticommutator data, and provide the denominator that
controls the correlation estimates throughout the paper. The tripartite
Pauli example in \prettyref{sec:2} shows that this norm bound can
already be sharp for genuinely multipartite configurations.

The same quantity links operator control to total correlation. Once
the product threshold is bounded, any positive excess of the value
of $B$ above that threshold forces both separation from the product
class in trace norm and a quantitative lower bound on $I_{\mathrm{tot}}(\rho)$.
Under the uniform $\ell^{2}$ assumption on the local expectation
vectors, this lower bound becomes fully explicit, with the product
threshold in the numerator and the parity defect denominator in the
denominator. The CHSH configuration and the Pauli-site family in \prettyref{sec:3}
illustrate this in concrete bipartite and multipartite settings where
the threshold, the defect denominator, and the resulting lower bounds
on total correlation can all be computed in closed form.

The dynamical results in \prettyref{sec:4} show that any quantitative
decay of $I_{\mathrm{tot}}(\rho_{t})$ along a quantum Markov semigroup
can be converted into a pointwise decay bound and a time-integrated
bound for the excess $\Delta_{B}(\rho_{t})$ above the product threshold,
with the same parity defect denominator appearing as the prefactor.
In the case of local depolarizing noise, the observable $B$ is an
eigenvector for the Heisenberg evolution, so $\Delta_{B}(\rho_{t})$
and the corresponding lower bound on total correlation can be tracked
explicitly along the flow.

A natural next step would be to identify operator families for which
the constants $C_{r}$ admit a more intrinsic description, for example
through orthogonality relations, frame-type conditions, or Clifford-type
geometry at each site, and to understand how such structure is reflected
in the defect weights $\phi^{\left(n\right)}_{ij}$. It would also
be interesting to understand whether other local semigroups lead to
explicit evolution formulas for $B$ or for the associated defect
data, and to relate bounds on $\Delta_{B}(\rho_{t})$ to functional
inequalities beyond the exponential decay of $I_{\mathrm{tot}}$.
Some recent directions related to entropy decay, the dynamics of such
observables, and geometric features of quantum channels may be found
in \cite{MR4703456,MR4704528,zbMATH07883336,MR4469657}.

\bibliographystyle{amsalpha}
\bibliography{ref}

\end{document}